\newcommand{\xline}[2][]{%
  \ext@arrow 0055{\arrowfill@\relbar\relbar\relbar}{#1}{#2}%
}
\newtheorem{defin}[theorem]{Definition} 
\title{Finding Some Impossibility of Flat-Folding of Given Origami Crease Pattern by Graphical Representation}
\author{Chihiro Nakajima\thanks{Department of Engineering, Tohoku Bunka Gakuen University, \texttt{chihiro.nakajima@ait.tbgu.ac.jp}}} 
\begin{document}
\thispagestyle{empty}
\maketitle

\begin{abstract}
The flat-foldability problem in origami asks whether a given crease pattern can be folded flat without any physical penetration or intrusion of polygons into the creases.  As established by Bern and Hayes, determining the global flat-foldability of a general crease pattern is an NP-hard problem. 

In this paper, we focus on unsigned crease patterns that satisfy the necessary local conditions imposed by the Kawasaki-Justin theorem at all interior vertices. To evaluate global foldability, we introduce an undirected graph representation-an overlap graph-that models pairwise non-intrusion constraints among overlapping polygons in a flattened state.  Using this graphical representation, we propose a polynomial-time algorithm to efficiently detect the impossibility of flat-folding by analyzing the algebraic properties of the graph's cycle basis. Specifically, we classify the nodes (intermediations) along each cycle and prove that the parity of a specific node kind governs the mathematical consistency of the loop. Detecting a self-inconsistent, frustrated, cycle via parity evaluation provides a robust sufficient condition for demonstrating that the entire crease pattern cannot be flat-folded. 

This result successfully isolates the tractable components of flat-foldability from its worst-case NP-hardness, providing a deeper understanding of the precise structural features that cause global computational difficulty. We also demonstrate the efficacy of our method by applying it to a well-known crease pattern that is fundamentally impossible to flat-fold.
\end{abstract}

\section{Introduction}
The flat-foldability problem asks whether a given crease pattern (e.g., Fig. 1(a)), which divides a piece of paper into polygons, can be folded flat into a valid stacking order without causing any self-intersections among the polygons and creases. Determining the flat-foldability of a general crease pattern, regardless of whether mountain/valley assignments are specified, is known to be NP-hard, as originally proven by Bern and Hayes \cite{BH} and later corrected by Akitaya et al. \cite{Akitayaetal}.

A fundamental necessary condition for flat-foldability is the Kawasaki-Justin theorem \cite{Kaw,Jus}, which restricts the angles around each interior vertex. However, while satisfying this condition guarantees local flat-foldability around every individual vertex, it does not ensure global flat-foldability, as global layer-stacking conflicts may still occur. Additional local constraints on valid mountain-valley assignments are given by the Big-Little-Big angle theorem \cite{Hull_b}.

Because of the gap between local and global foldability, proving global non-foldability often requires an exhaustive search to demonstrate that no valid stacking order exists.

In this study, we present a novel suﬃcient condition for global non-foldability applicable to unsigned crease patterns that satisfy local foldability at all interior vertices.
This condition enables a rigorous proof of global non-foldability in polynomial time, bypassing the need for exhaustive verification of stacking orders.
By identifying a computationally tractable subclass within locally flat-foldable patterns, we narrow the set of inherently hard instances in the flat-foldability decision problem and characterize the essential structural properties that govern the hardness.

To achieve this, we represent local stacking constraints and relative stacking orders graphically to efficiently detect global inconsistencies. The remainder of this paper is organized as follows.
Section 2 introduces the overlap diagram, relative stacking orders, and the overlap graph to formally encode local geometric constraints among subsets of polygons. 
Section 3 defines key concepts, such as intermediations, to analyze the structural inconsistency caused by constraint propagation along cycles within the overlap graph. Through this framework, we prove our main theorem for global non-foldability and demonstrate its application on an unsigned crease pattern which is known to be impossible to flat-fold \cite{Hull_t}. The content of this section is closely related to 2-XORSAT, and theorems in the latter half of Section 3 can be inherently linked to its properties. However, in this paper, the derivation of overlap graphs involves handling the indices of polygons forming a crease pattern in a somewhat complicated manner. To avoid any resulting confusion and to explain unambiguously that a clear origin, at least one of them, of global non-foldability can be reduced to the 2-XORSAT problem, we present these theorems using the graph-theoretic terminology introduced herein.
Next, Section 4 highlights unaddressed constraints and exceptional overlap diagrams that require an extension of our overlap graph, clarifying the fact that the absence of cycle-based inconsistency does not immediately guarantee global flat-foldability. Finally, Section 5 summarizes the findings of this paper and provides concluding remarks.

\section{Introduction of Overlap Diagram and Overlap Graph}
\begin{figure}[h!tbp]
  \centering
  \includegraphics[width=2.4in]{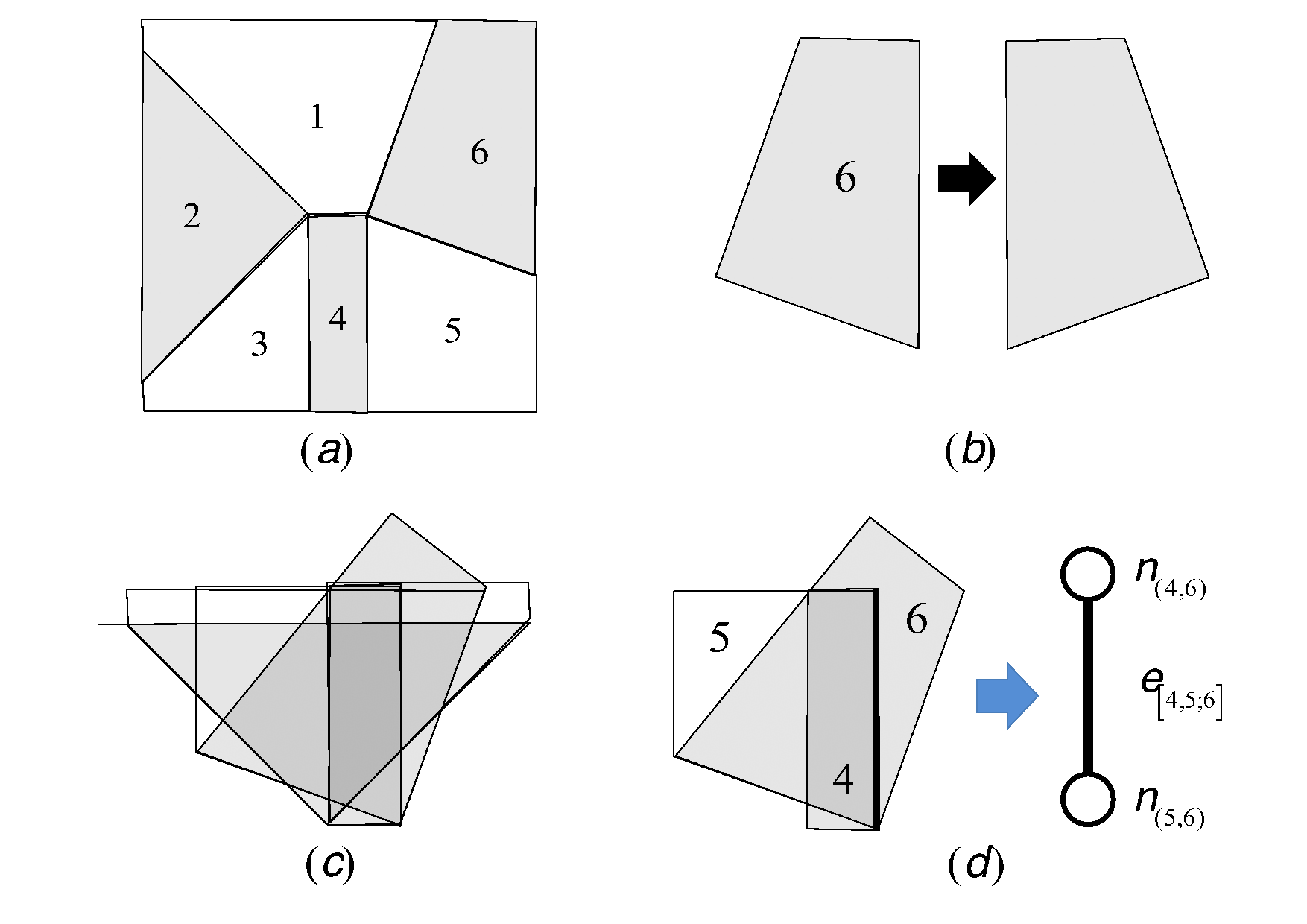}
\caption{(a) An example of a bicolored crease pattern known to be not flat-foldable, even with local flat-foldability around every interior vertices \cite{Hull_t}. 
  Each number in the figure represents the index of each polygon. (b) An example of the inversion of a shape of a polygon. The polygon whose index is $6$ in (a) is inverted. (c) The corresponding overlap diagram. The shapes of polygons with indices $1,3,5$ are the same as those in (a). The polygons with indices are $2,4,6$ are inverted compared to those in (a). (d)An edge corresponding to the non-self-intersection constraint which prohibits the polygon $6$ to intrude into the crease which connects the polygons $4$ and $5$.
}
\label{fig:ovDiag}
\end{figure}
Assuming a crease pattern satisfies the Kawasaki-Justin conditions at all interior vertices, ensuring no local angular breakdown, we introduce the overlap diagram.
This diagram is a planar arrangement of the constituent polygons that allows potential self-intersections (e.g., Fig. \ref{fig:ovDiag}(c), derived from the known non-foldable pattern in Fig. \ref{fig:ovDiag}(a) [3]).
It can be constructed in linear time with respect to the number of polygons.
By using this diagram, the flat-foldability problem is essentially reduced to finding a valid stacking order that resolves all potential intersections.
\begin{defin}\label{def:origamidiagram}
Overlap diagram : 
  An overlap diagram is a planar arrangement of the polygons formed by the creases in a given crease pattern,
  allowing any resulting potential self-intersections.
  The precise deterministic procedure for constructing this diagram is detailed in Algorithm \ref{alg:overlap_diagram_construction}.
  Note that the outer boundary of the crease pattern is not considered a crease.
\end{defin}
\begin{algorithm}
\caption{Construction of the Overlap Diagram}
We assume that the crease pattern is given, that is, all values of angles and edge-lengths of each polygon are given and the connection-relation among polygons are also given.
\label{alg:overlap_diagram_construction}
\begin{algorithmic}[1]
  \Require A crease pattern
  \Ensure An overlap diagram
  \State \textbf{Step 1:}Assign two colors to the constituent polygons of the crease pattern. (This 2-colorability is guaranteed by Lemma \ref{thm:bicolorability}).
  \State \textbf{Step 2:}Invert (reflect) the geometric shapes of the polygons belonging to one of the two color classes.
  \State \textbf{Step 3:}Arrange and connect the adjacent polygons along their shared creases on a plane, allowing (or ignoring) any resulting physical self-intersections. The resulting planar arrangement is defined as the overlap diagram.
\end{algorithmic}
\end{algorithm}
As a supplement to Definition \ref{def:origamidiagram}, we state as a lemma that any crease pattern that satisfies the necessary condition for locally flat folding at all interior vertices, as required by the Kawasaki-Justin theorem, is 2-colorable. See the appendix of this manuscript for the proof.

To avoid confusion with physical self-intersections, we strictly use the term "overlap" for polygons sharing planar regions in the overlap diagram.

A valid flat-folding requires a feasible vertical stacking order for these overlapping polygons, ensuring no polygon penetrates any crease. Such penetrations potentially occur when a polygon extends across a crease line. To systematically evaluate these constraints, we model the overlap diagram as an undirected graph, assigning a relative stacking order to each pair of overlapping polygons.

On each overlapping of the polygons in the overlap diagram, 
each relative order-in-stacking is assigned.
\begin{defin}\label{def:loc_ord}
  Let $P_i$ and $P_j$ be two polygons sharing a positive area in the overlap diagram.
  We define the relative stacking order $\sigma_{i,j} \in \{1, -1\}$, where $\sigma_{i,j} = 1$ if $P_i$ lies above $P_j$, and $-1$ if $P_i$ lies below $P_j$.
  By definition, the antisymmetry $\sigma_{i,j} = -\sigma_{j,i}$ holds.
\end{defin}
Physical constraints against self-intersections impose strict conditions on these variables:
\begin{lemma}\label{thm:constr}
  Let polygons $P_i$ and $P_j$ be connected by a crease, and let polygon $P_k$ overlap this crease in the overlap diagram. To prevent self-intersection, $P_i$ and $P_j$ must lie on the same side of $P_k$ in the vertical stacking order.
  In other words, $P_k$ cannot intrude between $P_i$ and $P_j$ in the stack.
  Consequently,
  \begin{eqnarray}
    \sigma_{i,k} \sigma_{k,j} = -1.\label{eq:two_rel}
  \end{eqnarray}
\end{lemma}

\begin{proof}
  If $P_k$ were to lie vertically between $P_i$ and $P_j$ (which implies $(\sigma_{i,k}, \sigma_{k,j}) = (1,1)$ or $(-1,-1)$, yielding $\sigma_{i,k}\sigma_{k,j} = 1$), it would penetrate the crease connecting $P_i$ and $P_j$, which is physically impossible.
\end{proof}
By Definition \ref{def:loc_ord}, Eq.(\ref{eq:two_rel}) can also be described by other three ways; $\sigma_{k,i}\sigma_{k,j}=1$, $\sigma_{i,k}\sigma_{j,k}=1$, and $\sigma_{k,i}\sigma_{j,k}=-1$.
This physical requirement constitutes the "non-intrusion condition" for the overlapping polygons.

As a consequence of Lemma \ref{thm:constr}, determining the relative stacking order at one overlap dictates the valid orders at adjacent overlaps.
This propagation of local constraints along chained connections naturally motivates the introduction of the following overlap graph.
  The non-intrusion constraint from Lemma \ref{thm:constr} links two relative stacking orders that share a common polygon.
  By treating these relative orders as nodes and the non-intrusion conditions as edges, we construct the overlap graph as followings.
  \begin{defin}\label{def:graph_rep}
    An overlap graph is an undirected graph that models the constraints of Lemma \ref{thm:constr} among polygons.
    Its formal structure, consisting of nodes and edges, is systematically constructed according to Algorithm \ref{alg:overlap_graph_construction}.
  \end{defin}
Note that the semicolon in $e_{[i,j;k]}$ distinguishes the intruding polygon $P_k$ from the crease-forming pair $(P_i, P_j)$, as their geometric roles are not interchangeable.
  \begin{algorithm}
    \caption{Construction of the Overlap Graph}
    \label{alg:overlap_graph_construction}
We assume that the crease pattern is given, that is, all values of angles and edge-lengths of each polygon are given and the connection-relation among polygons are also given.
\begin{algorithmic}[1]
  \Require An overlap diagram. (It is assumed that the crease pattern is given, including all angles, edge-lengths, and connection-relations among the constituent polygons).
\Ensure The corresponding overlap graph.

\Statex \textbf{Step 1: Node Generation}
\For{every pair of polygons $P_i$ and $P_j$ that share a positive overlapping area in the overlap diagram}
    \State Introduce a node $n_{(i,j)} \in V$.
    \State Define the relative stacking order between $P_i$ and $P_j$ as a variable $\sigma_{i,j} = -\sigma_{j,i}$ associated with this node.
\EndFor

\Statex \textbf{Step 2: Edge Generation}
\For{every polygon $P_k$ that geometrically spans across the crease connecting $P_i$ and $P_j$ in the overlap diagram}
    \State Introduce an undirected edge $e_{[i,j;k]} \in E$ connecting the nodes $n_{(i,k)}$ and $n_{(j,k)}$.
\EndFor
\end{algorithmic}
\end{algorithm}

This graph translates geometric stacking constraints into a combinatorial structure. For example, in Fig. \ref{fig:ovDiag}(d), edge $e_{[4,5;6]}$ encodes the constraint preventing polygon 6 from intruding into the crease between polygons 4 and 5.
The full overlap graph for the pattern in Fig. \ref{fig:ovDiag}(c) consists of 15 nodes and 20 edges as shown in Fig. \ref{fig:graph_hull_whole}(a).
In addition, as will be discussed later around Theorem \ref{thm:unfoldability} in Section \ref{sec:cycles_and_impossibility}, in the context of detecting impossibility of flat-folding, we focus on the cycles contained in the graph obtained with this convention.
By removing nodes and edges that are clearly unrelated to cycles from the graph in Fig. \ref{fig:graph_hull_whole}(a), the graph in Fig. \ref{fig:graph_hull_whole}(b) which contains 13 nodes and 18 edges is obtained.
\begin{figure}[h!tbp]\label{fig:graph_hull_whole}
  \centering
  \includegraphics[viewport=70 80 770 495,clip,width=2.4in]{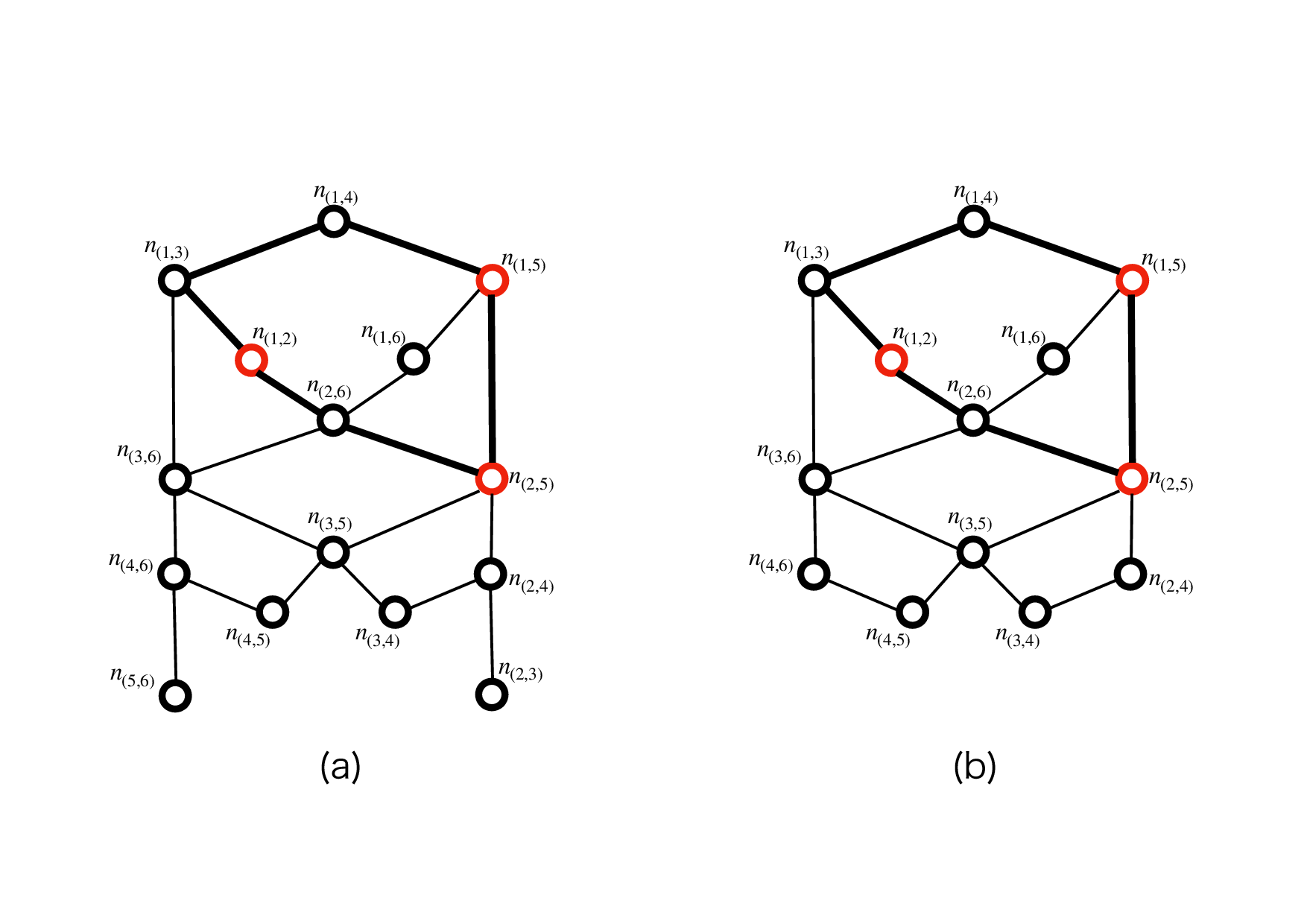}
  \caption{The overlap graph corresponding to the overlap diagram in Fig.1(b), which represents a cycle.
    A cycle corresponding to Fig.4 is described with thick line.
    The red circles are the second kind intermediations in the cycle.
    (a) Entire overap graph. (b) The overlap graph with nodes and edges, which are clearly unrelated from cycles, removed from the graph of (a).
  }
  \label{fig:graph_hull}
\end{figure}

\section{Cycles in Overlap Graph and Impossibility of Flat-Folding}\label{sec:cycles_and_impossibility}
In this section, we analyze cycles in the overlap graph to detect global non-foldability.
By tracing how local stacking constraints propagate along these cycles via Lemma \ref{thm:constr}, we can systematically identify logical contradictions in the stacking order.

To formalize this propagation, we define an intermediation as a subpath of length two within a cycle or path—specifically, a central node together with its two incident edge.
Because an intermediation is strictly defined by these incident edges, a single node can represent distinct intermediations depending on the cycle it belongs to.

Although purely combinatorial, intermediations encode the local geometric roles of polygons.
By tracing consecutive intermediations along a cycle, we may find that constraint propagation imposes two contradictory relative stacking orders on the same polygon pair, thereby proving non-foldability.
Based on the geometric relationships within this length-two path, intermediations are classified into two distinct types.
\begin{defin}\label{def:intermed}
  Let $n_{(i,j)}$ be a node in the overlap graph.
  When considered in the context of a specific path or cycle passing through it, the node $n_{(i,j)}$ itself is referred to as an \emph{intermediation} with respect to the specific pair of incident edges belonging to that path.
  Because its role depends on the chosen pair of incident edges, an intermediation is classified into two types based on those edges:
  \begin{itemize}
  \item \textbf{First kind:}  The two incident edges share the same spanning polygon \textup{(}e.g., $e_{[p,i;j]}$ and $e_{[i,q;j]}$\textup{)}.
  \item \textbf{Second kind:} The two incident edges have different spanning polygons \textup{(}e.g., $e_{[p,i;j]}$ and $e_{[j,q;i]}$\textup{)}.
  \end{itemize}
\end{defin}

Geometrically, this classification reflects whether the "non-intruding" spanning role remains fixed on $P_j$ (first kind) or alternates between $P_i$ and $P_j$ (second kind), as illustrated in Fig. \ref{fig:NFF_Hull}.
Because these roles dictate how the relative stacking order at $n_{(i,j)}$ restricts the adjacent order along a path, the constraint propagates deterministically.
This mechanism is formalized in Lemma \ref{thm:cor_local_order_taken_over}.
\begin{figure}[h!tbp]
  \centering
  \includegraphics[width=2.4in]{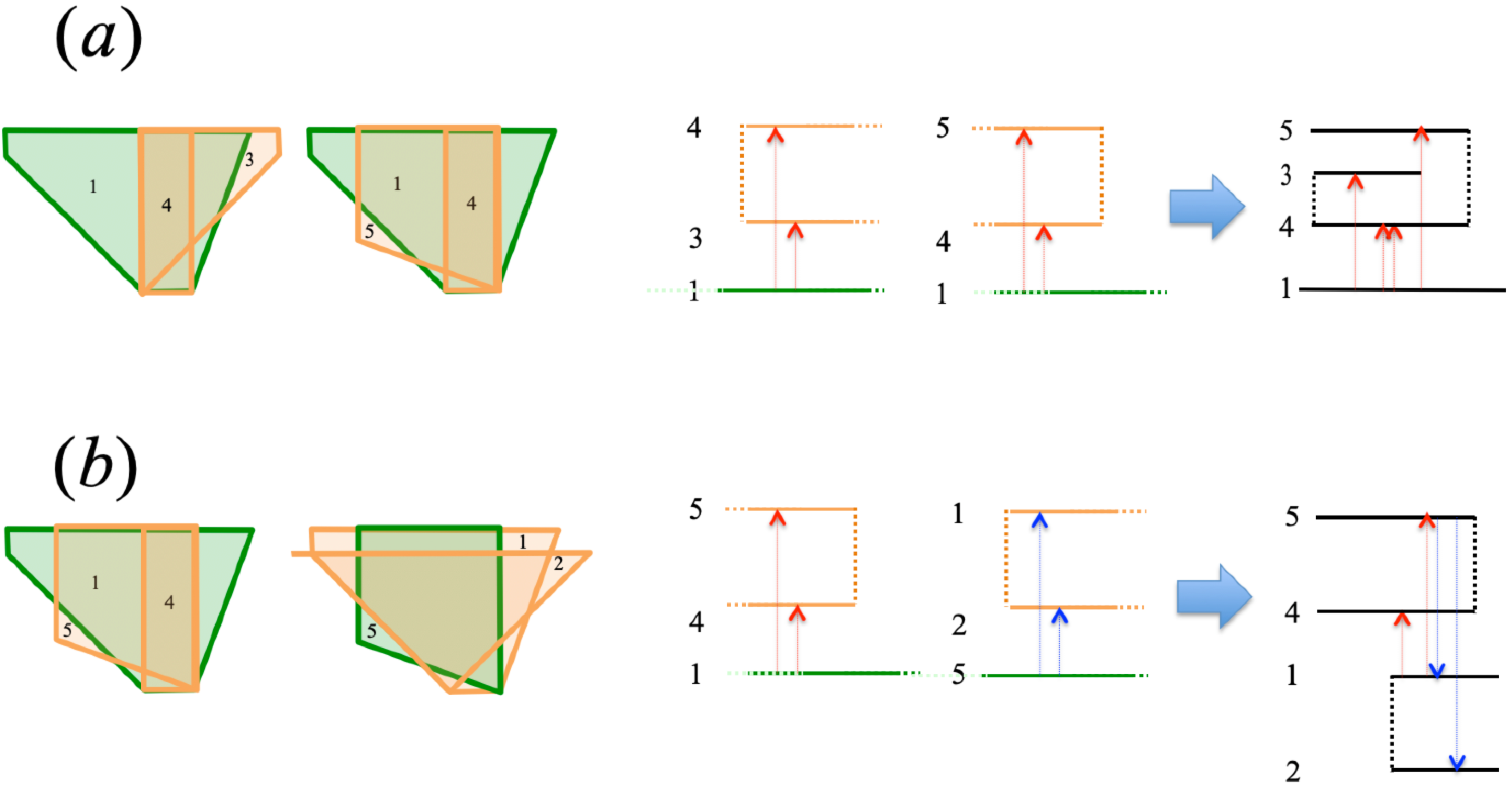}
  \caption{
    Explanation of the relationship between the kind of intermediation and the constraints imposed on the polygon stacking in the subgraph included in the overlap graph of Figure \ref{fig:graph_hull_whole}.
    (a) In the path $n_{(1,3)} \xrightarrow{e_{[3,4;1]}} n_{(1,4)} \xrightarrow{e_{[4,5;1]}} n_{(1,5)}$, $n_{(1,4)}$ is the intermediation of first kind.  (b) In the path $n_{(1,4)} \xrightarrow{e_{[4,5;1]}} n_{(1,5)} \xrightarrow{e_{[1,2;5]}} n_{(2,5)}$, $n_{(1,5)}$ is the intermediation of second kind. 
  }
  \label{fig:NFF_Hull}
\end{figure}
\begin{lemma}\label{thm:cor_local_order_taken_over}
  Consider a directed path traversing an intermediation at node $n_{(i,j)}$ from a preceding node $n_{(p,j)}$ to a succeeding node $n_{(q,t)}$, where $P_q$ is the newly appearing polygon and $t \in \{i, j\}$.
  By adopting the convention that the newly appearing polygon's index is placed first in the subscript (i.e., evaluating $\sigma_{q,t}$), the propagation of the stacking order is given by:
  \begin{itemize}
  \item If $n_{(i,j)}$ is of the first kind \textup{(}$t=j$\textup{)}, then $\sigma_{q,t} = \sigma_{i,j}$.
  \item If $n_{(i,j)}$ is of the second kind \textup{(}$t=i$\textup{)}, then $\sigma_{q,t} = -\sigma_{i,j}$.
  \end{itemize}
\end{lemma}
\begin{proof}
  Let the path arrive at $n_{(i,j)}$ via $e_{[p,i;j]}$.
  By Lemma \ref{thm:constr}, $\sigma_{p,j}\sigma_{j,i} = -1$, yielding $\sigma_{i,j} = \sigma_{p,j}$.
  If $n_{(i,j)}$ is of the first kind, the next edge is $e_{[i,q;j]}$.
  Lemma \ref{thm:constr} requires $\sigma_{i,j}\sigma_{j,q} = -1$.
  Using the antisymmetry $\sigma_{j,q} = -\sigma_{q,j}$, this simplifies to $\sigma_{q,j} = \sigma_{i,j}$.
  If $n_{(i,j)}$ is of the second kind, the next edge is $e_{[j,q;i]}$.
  Lemma \ref{thm:constr} requires $\sigma_{j,i}\sigma_{i,q} = -1$.
  Since $\sigma_{j,i} = -\sigma_{i,j}$ and $\sigma_{i,q} = -\sigma_{q,i}$, this implies $(-\sigma_{i,j})(-\sigma_{q,i}) = -1$, leading to $\sigma_{q,i} = -\sigma_{i,j}$.
\end{proof}

By repeatedly applying Lemma \ref{thm:cor_local_order_taken_over}, the stacking order propagates deterministically along any path in the overlap graph.
For a given crease pattern to be globally flat-foldable, completing the traversal of any cycle must return the propagation to the initial stacking order without contradiction.
Consequently, if traversing a cycle inverts the initial stacking order, it proves that the pattern is non-foldable, as it implies there is no valid stacking order.
This criterion is formalized in the following Theorem \ref{thm:unfoldability}.

At this point, it may be noticed that this propagation behavior is similar to the structure of a 2-XORSAT instance\cite{Schaefer}. 
Indeed, if we fix the subscript order of the variable $\sigma_{i,j}$ (which denotes the relative stacking order for each pair $(i,j)$), each pairwise constraint of Lemma \ref{thm:constr} symbolized by edges directly corresponds to a clause in 2-XORSAT.
Under this perspective, the cycle-parity criterion of Theorem \ref{thm:unfoldability} explained immediately below corresponds to the balance condition\cite{Harary} of 2-XORSAT.
Looking ahead, Theorem \ref{thm:property_of_junction}, which establishes the parity-under-addition statement, is equivalent to the GF(2)-linearity of the parity functional on the cycle space, and Theorem \ref{thm:cyc_poly}, which introduces the cycle-basis reduction to systematically pinpoint the polygons causing the non-foldability, also corresponds to a standard technique used in the context of 2-XORSAT\cite{Diestel_b}.
\begin{theorem}\label{thm:unfoldability}
  A consistent stacking order exists along a cycle $C$ in a given overlap graph if and only if $C$ contains an even number of second-kind intermediations.
\end{theorem}
\begin{proof}(Outline:)
  Let a cycle $C$ in the overlap graph be represented by a sequence $(n_0, e_1, n_1, \dots, e_L, n_L)$ with $n_L = n_0$. Here, the nodes and edges are denoted using representative notation.
    By Lemma \ref{thm:cor_local_order_taken_over}, traversing an intermediation at node $n_a$ propagates the relative stacking order $\sigma_a$ to the next node with a sign factor determined by the type of intermediation.
  Here, the indices of the nodes and edges are denoted in a representative manner.
  Although the details of the indices at each node and edge depend on the local connectivity (as captured by the notation $t$ in Lemma \ref{thm:cor_local_order_taken_over}), the sign propagation rule for $\sigma_a$ remains consistent by adopting the same subscript convention as in Lemma \ref{thm:cor_local_order_taken_over} at each step.
  Specifically, the sign inverts if and only if the intermediation is of the second kind.\\
  Let $\sigma_{a}$ be the relative stacking order at node $n_a$, indexed according to the convention in Lemma \ref{thm:cor_local_order_taken_over}) (placing the newly joined polygon's index first).
  Applying Lemma \ref{thm:cor_local_order_taken_over} successively along the cycle, the relationship between the initial order $\sigma^{(0)}$ and the order after $L$ steps $\sigma^{(L)}$ is given by:
  \begin{equation}
    \sigma_{L} = (-1)^{N_{\textrm{II}}} \sigma_{0},
  \end{equation}
  where $N_{\textrm{II}}$ denotes the total number of second-kind intermediations encountered along the cycle.
  Note that the kinds of the intermediations $n_{L-1}$ and $n_L (=n_0)$ are uniquely determined by the detail of the subscripts in edges $(e_{L-1}, e_L, e_1)$, ensuring that the kind of $n_{L-1}$ itself is accounted for in $N_{\textrm{II}}$.
  Since $n_L$ and $n_0$ represent the same physical constraint between the same pair of polygons, global consistency requires $\sigma_L = \sigma_0$. 
  This holds if and only if $N_{\textrm{II}}$ is even. 
  Conversely, an odd $N_{\textrm{II}}$ leads to $\sigma_0 = -\sigma_0$, a contradiction indicating that no valid stacking assignment exists.
\end{proof}
\begin{figure}[h!tbp]
  \centering
      \includegraphics[width=2.4in]{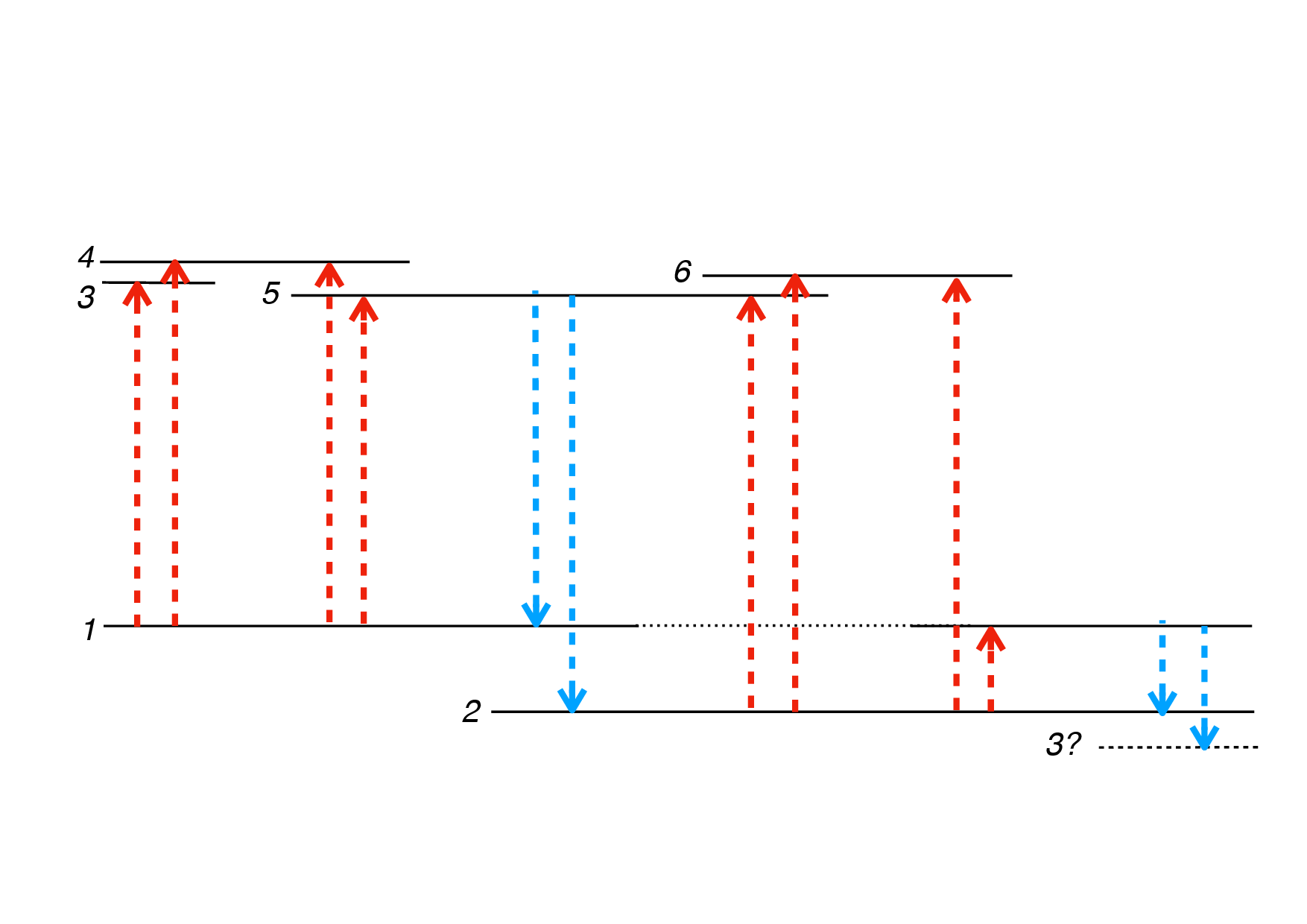}
  \caption{
A sequence of requiments of layer orders on the cycle of Eq.(\ref{eq:exam_cyc_v}) (and also of the thick line in Fig. \ref{fig:graph_hull}).
Initially a relative order $P_3>P_1$ is assumed. From that assumption, eventually $P_1>P_3$ is imposed.
  }
  \label{fig:mujyun_Hull}
\end{figure}

To demonstrate the detection of a self-inconsistent cycle, we examine the thick-lined cycle in Fig. \ref{fig:graph_hull} (schematized in Fig. \ref{fig:mujyun_Hull}).
Traversing this cycle clockwise from $n_{(1,3)}$, the sequence of nodes is given by:
\begin{equation}\label{eq:exam_cyc_v}
  n_{(1,3)} \to n_{(1,4)} \to n_{(1,5)} \to n_{(2,5)} \to n_{(2,6)} \to n_{(1,2)} \to n_{(1,3)}.
\end{equation}
To explicitly observe the condition propagation, let us assume the initial relative stacking order is $\sigma_{3,1} = +1$.
By successively applying the local constraint of Lemma 3 along the path, the sequence of relative stacking orders upon reaching $n_{(1,2)}$ is obtained as:
\begin{equation}
  \label{eq:loc_stck_exam}(\sigma_{4,1}, \sigma_{5,1}, \sigma_{2,5}, \sigma_{6,2}, \sigma_{1,2}) = (+1, +1, -1, +1, +1).
\end{equation}
The sign of the local stacking order inverts when the path passes through the second-kind intermediations $n_{(1,5)}$ and $n_{(2,5)}$.
At the final step, passing through the intermediation $n_{(1,2)}$ and edge $e_{[2,3;1]}$ completes the full cycle.
Because $n_{(1,2)}$ is also a second-kind intermediation, this step imposes $\sigma'_{3,1} = -1$. This contradicts the initial assumption $\sigma_{3,1} = +1$.
  This step-by-step tracking exemplifies the global criterion established in Theorem \ref{thm:unfoldability}.
  The cycle contains three ($N_{\textrm{II}} = 3$) second-kind intermediations.
  Because this number is odd, any initial assignment inevitably propagates to its own negation, proving that no valid stacking order exists for the crease pattern in Fig. 1(a). 
In statistical mechanics, such structural inconsistency within a cycle is known as frustration \cite{Nakajima_APPC, Nishimori, Nakajima_OSME}.

As is well known, any cycle in a given graph can be expressed as a linear combination ($\oplus$) of its cycle basis. 
As noted prior to Definition \ref{def:intermed}, the kind of intermediation at a node depends on the specific cycle traversing it.
Consequently, when combining two sub-cycles $C_1$ and $C_2$ to form a composite cycle $C_1 \oplus C_2$, the absolute number of second-kind intermediations is not simply conserved.
Since the intermediation attribute of each node can dynamically change depending on the combination of cycles, interpreting this behavior directly through the lens of 2-XORSAT, although ultimately correct, is not immediately straightforward.
Remarkably, however, the parity of this number is strictly conserved during cycle addition, which ultimately establishes the GF(2)-linearity of the parity functional on the cycle space\cite{Diestel_b}.
This crucial conservation of parity is established by the subsequent Theorem \ref{thm:property_of_junction}.
Its proof relies on two auxiliary results regarding the property changes of individual intermediations; to maintain the flow of our main argument, these supporting results are deferred to the Appendix as Lemma \ref{lem:property_change} and Corollary \ref{cor:parity_change_single}.
Because a cycle basis is computable in polynomial time, this parity conservation ensures that detecting a self-inconsistent cycle via Theorem \ref{thm:unfoldability} can be executed in polynomial time by verifying only the basis cycles.
To formalize this, let $C_1$ and $C_2$ be two cycles sharing a common path.
We partition the nodes in $C_1 \cup C_2$ into three disjoint subsets based on their incident cycle edges in $C_1 \oplus C_2$: $\mathcal{N}_{ex}$ (nodes excluded from $C_1 \oplus C_2$), $\mathcal{N}_{in}$ (nodes retaining their original incident edges), and $\mathcal{N}_{sw}$ (junction nodes where exactly one incident edge is replaced).
Note that $|\mathcal{N}_{sw}| = 2$ always holds.
\begin{theorem}\label{thm:property_of_junction}
  Let $C_1$ and $C_2$ be two cycles sharing a common path in a given overlap graph.
  The number of second-kind intermediations in the combined cycle $C_1 \oplus C_2$ satisfies:
  \begin{equation}
    N_{\textrm{II}}(C_1 \oplus C_2) \equiv N_{\textrm{II}}(C_1) + N_{\textrm{II}}(C_2) \pmod 2,
  \end{equation}
  where $N_{\textrm{II}}(C)$ denotes the total number of second-kind intermediations in a cycle $C$.
\end{theorem}
\begin{proof}
  (Outline :)
  We evaluate the parity of the difference $\Delta \equiv N_{\textrm{II}}(C_1 \oplus C_2) - N_{\textrm{II}}(C_1) \pmod 2$. This change in parity originates from two independent contributions:
  \begin{itemize}
  \item $\Delta_{set}$: The parity difference between the added second kind intermediations in $\mathcal{N}_{in}$ and the removed ones in $\mathcal{N}_{ex}$.
  \item$\Delta_{sw}$: The parity change due to the property transitions at the two junction nodes in $\mathcal{N}_{sw}$.
  \end{itemize}
  Thus, $\Delta \equiv \Delta_{set} + \Delta_{sw} \pmod 2$.
  By Corollary \ref{cor:parity_change_single} in Appendix, the parity change at the junctions $\Delta_{sw}$ is determined by whether the two nodes in $\mathcal{N}_{sw}$ have the same or different intermediation types in $C_2$.
  Meanwhile, $\Delta_{set}$ depends on the remaining nodes $C_2 \cap (\mathcal{N}_{in} \cup \mathcal{N}_{ex})$.
  Because the total number of second kind intermediations in $C_2$ is fixed, we have:
  \begin{equation}
    \Delta_{set} \equiv N_{\textrm{II}}(C_2) - N_{\textrm{II}}(C_2 \cap \mathcal{N}_{sw}) \pmod 2.
  \end{equation}
  By exhaustively analyzing the parity combinations based on $N_{\textrm{II}}(C_2)$ and the types of the two junction nodes in $\mathcal{N}_{sw}$, we obtain the relationships exhibited in Table \ref{tab:parity_cases}.
  \begin{table*}[t] 
    \centering
    \caption{Parity combinations for evaluating the difference $\Delta \equiv N_{\textrm{II}}(C_1 \oplus C_2) - N_{\textrm{II}}(C_1) \pmod 2$.}
    \label{tab:parity_cases}
    \begin{tabular}{llccc}
      \hline
      Parity of $N_{\textrm{II}}(C_2)$ & Types of two nodes in $\mathcal{N}_{sw}$ & $\Delta_{set}$ & $\Delta_{sw}$ & Total $\Delta \pmod 2$ \\
      \hline
      Odd  & Same (Both 1st or Both 2nd)  & Odd  & Even & Odd  \\
      Odd  & Different (One 1st, One 2nd) & Even & Odd  & Odd  \\
      Even & Same (Both 1st or Both 2nd)  & Even & Even & Even \\
      Even & Different (One 1st, One 2nd) & Odd  & Odd  & Even \\
      \hline
    \end{tabular}
  \end{table*}
  In all cases in Table \ref{tab:parity_cases}, the parity of the difference $\Delta$ strictly matches the parity of $N_{\textrm{II}}(C_2)$. Therefore, we conclude that $N_{\textrm{II}}(C_1 \oplus C_2) \equiv N_{\textrm{II}}(C_1) + N_{\textrm{II}}(C_2) \pmod 2$.
\end{proof}
The detailed proof of Theorem \ref{thm:property_of_junction} is shown in Appendix.

\begin{theorem}\label{thm:cyc_poly}
Detecting impossibility of flat-folding on a given graph representation of non-intrusion constraints can be performed in polynomial time.
\end{theorem}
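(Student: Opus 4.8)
The plan is to reduce the detection problem to a balance test on a signed graph derived from the overlap graph, and then to run that test on a cycle basis rather than on all cycles. By Theorem \ref{thm:unfoldability} it suffices to decide whether the overlap graph contains a cycle whose number of switching intermediations is odd. Since there may be exponentially many cycles, enumerating them is hopeless, so the first step is to recast the parity $s(C)$ of switching intermediations along a cycle $C$ as a multiplicative edge quantity. Concretely, for each node $n_{(i,j)}$ I would fix once and for all a canonical orientation of its index pair (say, the smaller index first) and let that node carry a single variable $y_{n}=\mathrm{sign}(l_{\min}-l_{\max})$. By Lemma \ref{thm:constr} every edge $e_{[i,j;k]}$ forces $\mathrm{sign}(l_i-l_k)=\mathrm{sign}(l_j-l_k)$; rewriting both sides in the canonical variables turns this into either $y_{n}=y_{n'}$ or $y_{n}=-y_{n'}$ for its two endpoints, which I record as an edge sign $\sigma(e)\in\{-1,+1\}$. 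Writing $o(e,v)\in\{\pm1\}$ for the orientation with which node $v$ occurs in edge $e$, one has $\sigma(e)=o(e,v)\,o(e,v')$ for the endpoints $v,v'$ of $e$.

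The crux is the identity linking $\sigma$ to switching. Reading Definition \ref{def:name_of_intermediation} through the proof of Lemma \ref{thm:local_order_taken_over}, a node $v$ is a switching intermediation of $C$ exactly when the two cycle-edges at $v$ present it with opposite orientations, i.e.\ $o(e_{\mathrm{in}},v)\,o(e_{\mathrm{out}},v)=-1$. Multiplying $\sigma$ around the cycle and regrouping the orientation factors by node gives
\[
\prod_{e\in C}\sigma(e)=\prod_{v\in C} o(e_{\mathrm{in}},v)\,o(e_{\mathrm{out}},v)=(-1)^{s(C)}.
\]
Hence a cycle has sign $-1$ iff it carries an odd number of switching intermediations, so Theorem \ref{thm:unfoldability} applies precisely to the negative cycles of this signed graph.

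Because the symmetric difference of two edge sets multiplies these signs (shared edges contribute $\sigma(e)^2=1$), the map $C\mapsto\prod_{e\in C}\sigma(e)$ is a homomorphism from the cycle space over $\mathrm{GF}(2)$ into $\{\pm1\}$. Consequently a negative cycle exists iff some member of a cycle basis is negative. I would therefore compute a spanning forest of the overlap graph, form the $|E|-|V|+c$ fundamental cycles (one per non-tree edge, $c$ the number of components), and evaluate $\prod_{e}\sigma(e)$ along each; the crease pattern is certified impossible to flat-fold iff one of these is $-1$. Equivalently, a parity union--find, or a single DFS that attempts to $2$-colour the variables $y_n$, tests balance directly. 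Building the overlap diagram and graph is linear in the number $P$ of polygons, the graph has at most $\binom{P}{2}$ nodes and polynomially many edges, a spanning forest and its fundamental cycles are found in $O(|V|+|E|)$, and each fundamental cycle is evaluated in $O(|V|)$, so the whole detection runs in time polynomial in $P$.

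The step I expect to be the real obstacle is not the complexity bookkeeping but the reduction itself: showing rigorously that the node-context notion of ``switching''/``conserving'' from Definition \ref{def:name_of_intermediation} collapses to a fixed, orientation-independent edge sign $\sigma(e)$, and that the regrouping identity above holds for every cycle regardless of how the canonical orientations were chosen. I would also be careful to state the scope precisely: what is decided in polynomial time is the presence of the specific obstruction of Theorem \ref{thm:unfoldability}, namely an odd-switching cycle, which is a sufficient but not necessary condition for non-flat-foldability, so the procedure is a \emph{sound} test whose negative answer does not by itself certify flat-foldability.
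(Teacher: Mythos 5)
Your proposal is correct, and its skeleton matches the paper's: invoke Theorem~\ref{thm:unfoldability}, observe that the parity of switching intermediations behaves multiplicatively under symmetric difference of cycles, and conclude that checking the elements of one cycle basis suffices, in polynomial time. Where you genuinely diverge is in how the key multiplicativity step is justified. The paper establishes it by Table~\ref{tab:tab1}, which is asserted rather than proved---and it is not a triviality, because switching is a property of a node \emph{together with its two incident cycle edges}, and when two cycles are combined the nodes on the common path acquire different incident edges, so their switching/conserving status can change. Your edge-sign factorization closes exactly this gap: each edge $e_{[i,j;k]}$ presents its endpoints $n_{(i,k)}$, $n_{(j,k)}$ in the natural orientation ``unsandwichable minus unintrudable,'' since the constraint of Lemma~\ref{thm:constr} reads $\mathrm{sign}(l_i-l_k)=\mathrm{sign}(l_j-l_k)$; hence $o(e,v)$ is well defined, flipping the canonical orientation at a node $v$ flips both incident factors $o(e_{\mathrm{in}},v)$, $o(e_{\mathrm{out}},v)$ and so leaves the cycle product invariant, and the two cases in the proof of Lemma~\ref{thm:local_order_taken_over} show that $v$ is switching in $C$ precisely when $o(e_{\mathrm{in}},v)\,o(e_{\mathrm{out}},v)=-1$. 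The identity $\prod_{e\in C}\sigma(e)=(-1)^{s(C)}$ then makes $C\mapsto(-1)^{s(C)}$ a homomorphism on the $\mathrm{GF}(2)$ cycle space, which is a proof of the paper's Table~\ref{tab:tab1} rather than an appeal to it. This reformulation also buys a simpler and faster algorithm: the odd-switching-cycle test becomes a balance test on a signed graph, solvable by a single DFS two-colouring or parity union--find in time near-linear in $|E|$, versus the paper's cycle-basis route bounded by $O(|E||V|)$; both are comfortably polynomial in the number of polygons, consistent with the paper's sixth-power bound. Your closing caveat---that the procedure soundly detects the specific obstruction of Theorem~\ref{thm:unfoldability} and a negative answer does not certify flat-foldability---is exactly the scope the paper itself concedes in Section~4.
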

\begin{proof}
  By Theorem \ref{thm:property_of_junction}, the parity of second-kind intermediations is conserved under cycle addition.
  Therefore, an overlap graph contains a self-inconsistent cycle (odd parity) if and only if at least one cycle in its basis possesses odd parity.
  This guarantees that the detection problem via Theorem \ref{thm:unfoldability} is reduced to evaluating solely the cycles within a cycle basis.

  A cycle basis for a given graph can be found in $O(EV)$ time \cite{KLMMRUZ}, where $V$ and $E$ are the numbers of nodes and edges, respectively.
  Let $N$ be the total number of polygons in the crease pattern.
  Since each node represents a pairwise constraint between two polygons, the upper bound on the number of nodes is $V = O(N^2)$.
  Consequently, the number of edges is bounded by $E = O(V^2) = O(N^4)$.

  Thus, the overall computational time to extract the cycle basis and verify the self-inconsistency condition is bounded by $O(EV) = O(N^6)$. 
\end{proof}

\section{Detection of Impossibility by Cycles in Overlap Graph: Applicability and Limitations}
While our cycle-based criterion efficiently identifies non-foldability, the absence of an inconsistent cycle does not guarantee global flat-foldability, reflecting the fact that testing flat-foldability is generally NP-hard \cite{BH, Akitayaetal}.
Our current framework models pairwise non-intrusion constraints but omits two specific geometric constraints.
First, collinear parallel creases sharing a line segment in an overlap diagram impose simultaneous stacking constraints on four adjacent polygons \cite{Nakajima_APPC,Nakajima_OSME}.
This four-body interaction is fundamentally different from our pairwise non-intrusion condition (Definition \ref{def:graph_rep}).
Second, when three or more polygons simultaneously share a single region in an overlap diagram, preventing a cyclic stacking order requires a different type of pairwise constraint \cite{Nakajima_APPC, Nakajima_OSME} that is not captured by our unintrudability condition.

Nevertheless, our pairwise non-intrusion constraint represents a fundamental physical requirement.
Because any local inconsistency detected by our condition cannot be resolved by incorporating those unmodeled constraints, a single self-inconsistent cycle found in the current overlap graph strictly prohibits global flat-foldability.
Therefore, detecting such a self-inconsistent cycle provides a robust sufficient condition for the impossibility of flat-folding.
The broader implications of this result are twofold.
First, the overlap graph provides a universal criterion for efficiently detecting non-foldability that does not rely on any specific geometric regularities or symmetric structures in the given crease pattern.
Second, from the perspective of computational complexity, this approach successfully identifies a polynomially tractable subclass of non-foldable instances within the broader NP-hard problem of global flat-foldability, even when local flat-foldability is strictly guaranteed.

\section{Summary and Conclusions}
In this manuscript the combination of unintrudabilities of a polygon into a crease consisting of two polygons is focused on and a sufficient condition for the impossibility of flat-folding is proven.
First each unintrudability condition is redescribed as the graphical representation which is in this paper called overlap graph. 
Next on the graph it is shown that the relationship among the relative orders-in-stacking between the polygons involved in unintrudability constraints are imposed. 
Then it is proven that there are graphs with a cycle such that no combination of relative orders-in-stacking can satisfy in total the required relationship on it.

The algorithmic ability to detect the impossibility of flat-folding also provides one step deeper scope in discussing the computational complexity of the flat-foldability determination. 
In general the computational complexity class is determined by the worst-case complexity of instances in the problem. 
Then, if the instances which are impossible to satisfy the given conditions are detected in polynomial time, they cannot be the worst-case ones, in the sense that they are easy to determine to be impossible. 
From the results of this paper, we can conclude that the worst-case hardness of the flat-foldability problem does not come from the combinations of unintrudability conditions, but from avoiding the cyclic order of stacking of three polygons or choosing stacking for inclusionary fold pairs. 
Thus, the studies and the methods in this paper are considered to be useful for separating the easy crease patterns from those with the worst-case computational complexity, and for identifying the properties where the hardness comes from.

\small
\bibliographystyle{abbrv}
\bibliography{samplerefs}
\if 0

\fi

\section*{Appendix}
\subsection*{A lemma for defining overlap graph}
\begin{lemma}\label{thm:bicolorability}
The arrangement of polygons in the crease patterns that satisfy for all interior vertices the necessary conditions imposed by the Kawasaki-Justin theorem is bicolorable.
\end{lemma}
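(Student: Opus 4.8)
The plan is to prove bicolorability by exhibiting a proper $2$-coloring of the faces (the polygons of the arrangement) in which two polygons receive opposite colors whenever they share a crease. I would phrase this as a bipartiteness statement for the \emph{dual adjacency graph} $D$, whose vertices are the polygons and whose edges join two polygons that share a crease segment. A proper $2$-coloring of the polygons is exactly a bipartition of $D$, so it suffices to show that $D$ contains no odd cycle.

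First I would isolate the one place where the Kawasaki-Justin hypothesis is used: at every interior vertex $v$ the alternating sum of the consecutive angles vanishes, which can only happen when the number of creases incident to $v$ is even. Thus every interior vertex has even crease-degree. Boundary vertices, where two edges of the paper contour meet an incident fan of creases, may have odd degree, but they will play no role below precisely because, as recorded in Definition \ref{def:originalorigamidiagram}, the paper contour is not a crease.

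Next I would translate a cycle of $D$ into a closed curve in the plane. A simple cycle $f_0, f_1, \dots, f_{k}=f_0$ of $D$ visits distinct polygons, consecutive ones sharing a crease; drawing an arc through each $f_j$ that crosses exactly the shared creases yields a simple closed curve $\gamma$ lying in the interior of the paper and meeting the crease skeleton transversally in exactly $k$ points, one per traversed crease. Since the paper is a topological disk, $\gamma$ bounds a sub-disk $R$ whose closure lies in the interior, so every crease-vertex enclosed by $\gamma$ is an \emph{interior} vertex. Counting incidences between the enclosed vertices and the crease edges by the elementary handshake identity $\sum_{v\in R}\deg(v) = 2\,(\text{edges with both ends in } R) + (\text{edges crossing }\partial R)$, the number of crease edges crossing $\partial R$, which is exactly $k$, satisfies $k \equiv \sum_{v \in R} \deg(v) \pmod 2$.

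Finally I would combine the two observations: by the previous step every $v \in R$ is an interior vertex, and by the Kawasaki-Justin consequence each such $\deg(v)$ is even, so $k$ is even. Hence every simple cycle of $D$ has even length, $D$ is bipartite, and the polygons admit the desired $2$-coloring. The step I expect to be the main obstacle is the geometric bookkeeping in the third paragraph, namely choosing the crossing arcs so that $\gamma$ is both simple and transverse and justifying that a simple closed curve in the interior of a disk encloses only interior vertices; by contrast the counting is routine, and the essential content of the lemma is carried entirely by the even-degree consequence of Kawasaki-Justin.
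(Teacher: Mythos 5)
Your proof is correct, and it reaches the paper's conclusion by a genuinely different mechanism at the decisive step. Both arguments share the same skeleton: reduce bicolorability to bipartiteness of the polygon-adjacency (dual) graph, and feed in the single consequence of Kawasaki--Justin that every interior vertex has even crease-degree. The paper then finishes algebraically: it observes that the cycles surrounding the vertices of the crease pattern are the face cycles of the planar dual, that these form a cycle basis, and that evenness of all basis elements propagates to every cycle (parity being preserved under symmetric difference, since $|A \,\triangle\, B| = |A| + |B| - 2|A \cap B|$). You instead finish geometrically: a simple cycle of the dual graph is realized as a simple closed transverse curve $\gamma$, the enclosed region $R$ contains only interior vertices because the paper is a disk and $\gamma$ avoids the contour, and the handshake count $\sum_{v \in R} \deg(v) \equiv k \pmod 2$ forces the cycle length $k$ to be even. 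What your route buys is twofold. First, it is more elementary and self-contained: it does not invoke the cycle-basis structure of planar graphs, and in effect it reproves the relevant special case of that fact by direct counting. Second, it handles the boundary explicitly, which the paper does not: boundary vertices of the crease pattern can have odd crease-degree (the contour is not a crease, per Definition \ref{def:originalorigamidiagram}), so the paper's phrasing that ``every cycle around each vertex'' of the dual is even is literally false at boundary vertices; the paper's argument survives only because those correspond to the outer face, which is excluded from the basis --- a point left tacit. Your observation that $R$ lies strictly in the paper's interior, so only interior vertices are ever counted, closes exactly this gap. The cost is the geometric bookkeeping you yourself flag (simplicity and transversality of $\gamma$, and that each of the $k$ crossed creases is crossed exactly once so that ``crossing $\partial R$'' coincides with ``exactly one endpoint in $R$''), all of which is routine since a simple cycle visits distinct polygons and the crossed creases are pairwise distinct.
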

\begin{proof}
First, one of the sufficient conditions for a given graph to be bipartite is that it has no odd cycles.
Required by Kawasaki-Justin theorem, for any interior vertex in a locally flat foldable crease pattern the number of angles that converge around it shall be even.
When every crease in a given crease pattern is regarded as an edge in a graph and the crease pattern itself is considered as an entire graph, the above requirement from Kawasaki-Justin theorem necessary to be locally flat-foldable is equivalent to that every cycle around each vertex on the plane-dual graph of the crease pattern shall be even.
In a planar graph, the set of cycles that surround every vertex is a cycle base.  Therefore it is resulted that every cycle in the graph is even.
Then, the planar dual of the given crease pattern is bipartite.  It means that the crease pattern is bicolorable.
\end{proof}
Please note that the correspondence between the crease patterns and graphs used in this proof is different from the overlap graphs in the main content of the paper.

\subsection*{Supplementary note on the introduction of relative ordering for overlaps}\label{sec:cyclic_order}
On the Introduction of Relative Ordering in Overlap Diagrams : \\
In considering the overlap of polygons within an overlap diagram, one might naturally attempt to define an ordering by assigning a global overlap level $l_i$ to each polygon $i$, and deriving local relations from the difference $l_i-l_j$.
However, this approach is inadequate in general, since overlap diagrams may contain cycles in their stacking relations, thereby precluding a consistent global ordering.
To accommodate such cyclic structures, we instead define a relative ordering as a relation that captures the relative overlap level (hierarchical position) between overlapping polygons, but only at the locations where the overlaps actually occur. This formulation allows us to describe stacking configurations that cannot be represented by global levels alone.\\
\begin{figure}[h!tbp]\label{fig:AB}
  \includegraphics[width=2.4in]{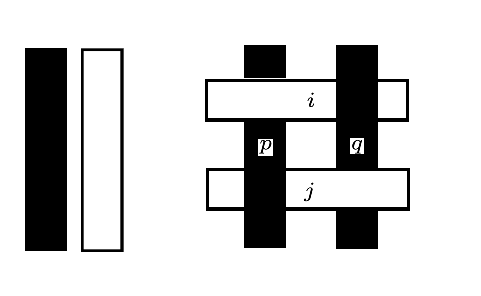}
  \caption{Scematic diagram of an example of a cyclic stacking. Four rectangular polygons of the same shape $i,j,p,q$ overlap in a cyclic order. This situation can be expressed by the variables of relative order-of-stacking as $(\sigma_{p,i},\sigma_{p,j},\sigma_{q,i},\sigma_{q,j})=(1,-1,-1,1)$, however it is not possible to define a global overlap order among this four rectangules.}
\end{figure}
In addition, when a single pair of polygons overlaps directly at two or more distinct points on the overlap diagram, the relative ordering framework introduced in this paper becomes ill-defined. 
To circumvent this issue, we restrict our attention to crease patterns in which all polygonal regions are convex. 
Consequently, the method presented here does not apply to diagrams that include non-convex (i.e., concave) polygons. At present, it is unknown whether such crease patterns exist—particularly those that are flat-foldable, as far as the author's knowledge.

\subsection*{Detailed proof for Theorem \ref{thm:unfoldability}}
\begin{proof}(Detailed:)
  We trace the relationship between the number of second-kind intermediations contained in the cycle and the constraints propagated around the cycle by dividing the cases into those with the path fragments contained in a cycle being the following four cases;
  \begin{itemize}
  \item $n_{(y,z)}$,$e_{[y,a;z]}$,$n_{(z,a)}$, $e_{[z,b;a]}$, $n_{(a,b)}$, $e_{[a,c;b]}$, $n_{(b,c)}$ 
  \item $n_{(y,z)}$,$e_{[y,b;z]}$,$n_{(z,b)}$, $e_{[z,a;b]}$, $n_{(a,b)}$, $e_{[a,c;b]}$, $n_{(b,c)}$ 
  \item $n_{(y,a)}$,$e_{[y,z;a]}$,$n_{(z,a)}$, $e_{[z,b;a]}$, $n_{(a,b)}$, $e_{[a,c;b]}$, $n_{(b,c)}$ 
  \item $n_{(y,b)}$,$e_{[y,z;b]}$,$n_{(z,b)}$, $e_{[z,a;b]}$, $n_{(a,b)}$, $e_{[a,c;b]}$, $n_{(b,c)}$ 
  \end{itemize}
  The indices of the above four path fragments are described in unified manner as follows;
  \begin{eqnarray}
    e_{[y,u;\bar{u}]}, n_{(z,t)}, e_{[z,\bar{t};t]}, n_{(a,b)}, e_{[a,c;b]},
  \end{eqnarray}
  where $t \in \mathcal{T}=\{a,b\}$ and $u \in \mathcal{U}=\{z,t\}$.
  Note that the set $\mathcal{U}$ is defined depending on the choice of $t$, namely, the contents of $\mathcal{T}$.
  In addition, we define $\bar{t}\in\mathcal{T}/t$ and $\bar{u}\in\mathcal{U}/u$.\\  
  Since the cycle is closed, $n_{(\bar{u},u)}$ and $n_{(b,c)}$ are connected by another path.
  Suppose that the values $k$ and $N'_{\textrm{II}}$ are respectively the indication of the second kind intermediation of the node $n_{(\bar{t},t)}=n_{(a,b)}$ and the count of the second-kind intermediations along the another path between $n_{\bar{t},t}$ and $n_{(u,\bar{u})}=n_{(z,t)}$, excluding both endnodes themselves.
  By repeatedly applying Lemma \ref{thm:cor_local_order_taken_over} to this another path between $n_{(\bar{t},t)}$ and $n_{(u,\bar{u})}$, we obtain
  \begin{eqnarray}                               
    \sigma_{u,\bar{u}}=(-1)^{k+N'_\textrm{II}}\sigma_{\bar{t},t}, \label{eq:middle_of_cycle_Re}
  \end{eqnarray}
  where $t$ in the subscript of $\sigma_{\bar{t},t}$ is given by the choice of the edge $e_{[z,\bar{t};t]} \in \{e_{[z,a;b]},e_{[z,b;a]}\}$, which is a final one on the traversing.
  Here $N'_\textrm{II}$ denotes the second-kind intermediations encountered along the path excluding both endpoints.
  Note that the denoting manner of the indices of the relative stacking-orders, $i$ and $j$ in $\sigma_{i,j}$, in this proof are arranged with the same manner as noted in the last part of Lemma \ref{thm:cor_local_order_taken_over}.
  At the final step of traversing, when we return to the starting node $n_{(a,b)}$ via the edge $e_{[z,\bar{t};t]}$, Lemma \ref{thm:constr} yields
  \begin{eqnarray}\label{eq:last_edge_Re}
    \sigma^{\prime}_{t,\bar{t}}=\sigma_{z,\bar{t}}.
  \end{eqnarray}
  Substituting Eq.~(\ref{eq:middle_of_cycle_Re}) into Eq.~(\ref{eq:last_edge_Re}) gives
  \begin{eqnarray}\label{eq:full_cycle_CasesRep_Re}
    \sigma^{\prime}_{t,\bar{t}}=
    \begin{cases}
      (-1)^{k+N'_\textrm{II}}\sigma_{\bar{t},t} \quad \text{($u=z$)},\\
      -(-1)^{k+N'_\textrm{II}}\sigma_{\bar{t},t} \quad \text{($u=t$)}.
    \end{cases}
  \end{eqnarray}
  Introducing $k^{\prime}$ as the indicator of the second-kind intermediation at $n_{(u,\bar{u})}=n_{(z,t)}$, this can be rewritten as
  \begin{eqnarray}\label{eq:full_cycle_SingleRep_Re}
    \sigma^{\prime}_{\bar{t},t}=(-1)^{k+N'_\textrm{II}+k^{\prime}}\sigma_{\bar{t},t}.
  \end{eqnarray}
  Here $k$ and $k'$ are determined by the choice of $(t,u)$ as follows:
  \begin{eqnarray}
    (k,k^{\prime})=
    \begin{cases}
      (1,1) \quad \text{($(t,u)=(a,t)$)}\\ 
      (0,1) \quad \text{($(t,u)=(b,t)$)}\\ 
      (1,0) \quad \text{($(t,u)=(a,z)$)}\\ 
      (0,0) \quad \text{($(t,u)=(b,z)$)}. 
    \end{cases}
  \end{eqnarray}
  Thus the exponent $k+N'_\textrm{II}+k'$ counts precisely all second-kind intermediations in the cycle.
  Therefore the relationship between the types of intermediates and the propagation of conditions described in Corollary also holds for closed paths.
  Consistency requires $\sigma'_{\bar{t},t}=\sigma_{\bar{t},t}$, which holds if and only if this number is even.
  Therefore, a consistent stacking assignment exists exactly when every cycle in the overlap graph contains an even number of second-kind intermediations, as claimed.
\end{proof}

\subsection*{Detailed proof for Theorem \ref{thm:property_of_junction}}
\begin{proof}
  We evaluate the difference $N_{\textrm{II}}(C_1 \oplus C_2)-N_{\textrm{II}}(C_1)$ by considering cases based on the parity of $N_{\textrm{II}}(C_2)$.
  The change in parity results from the change in the intermediation property of the two nodes in $\mathcal{N}_{sw}$, as well as the difference in the number of intermediations in $C_2 \cap \mathcal{N}_{in}$ and $C_2 \cap \mathcal{N}_{ex}$.

  First, consider the case where $N_{\textrm{II}}(C_2)$ is odd and the two intermediations in $\mathcal{N}_{sw}$ are either both first-kind or both second kind in $C_2$.
  In this case, the set $C_2 \cap (\mathcal{N}_{in} \cup \mathcal{N}_{ex})$ contains an odd number of second kind intermediations.
  These are distributed between $C_2 \cap \mathcal{N}_{in}$ and $C_2 \cap \mathcal{N}_{ex}$ such that one subset contains an odd number and the other contains an even number.
  Consequently, the difference between $N_{\textrm{II}}(C_2 \cap \mathcal{N}_{in})$ and $N_{\textrm{II}}(C_2 \cap \mathcal{N}_{ex})$ has an odd parity.
  In addition, the contribution from the change of properties in $\mathcal{N}_{sw}$ is determined to be even by Corollary \ref{cor:parity_change_single}.
  Alternatively, consider the case where $N_{\textrm{II}}(C_2)$ is odd, but of the two intermediations in $\mathcal{N}_{sw}$, one is of the first-kind and the other is of the second kind.
  In this scenario, the set $C_2 \cap (\mathcal{N}_{in} \cup \mathcal{N}_{ex})$ contains an even number of second kind intermediations.
  These are distributed between $C_2 \cap \mathcal{N}_{in}$ and $C_2 \cap \mathcal{N}_{ex}$ such that both subsets contain either an odd number or an even number.
  Therefore, the difference between $N_{\textrm{II}}(C_2 \cap \mathcal{N}_{in})$ and $N_{\textrm{II}}(C_2 \cap \mathcal{N}_{ex})$ has an even parity.
  In addition, the contribution from the change of properties in $\mathcal{N}_{sw}$ is determined to be odd by Corollary \ref{cor:parity_change_single}.
  Hence, we conclude that the difference $N_{\textrm{II}}(C_1 \oplus C_2)-N_{\textrm{II}}(C_1)$ yields an odd parity in all cases where $N_{\textrm{II}}(C_2)$ is odd.

  Next, we consider the case where $N_{\textrm{II}}(C_2)$ is even. Suppose first that the two intermediations in $\mathcal{N}_{sw}$ share the same property in $C_2$ (i.e., both are first-kind or both are second kind). In this case, the set $C_2 \cap (\mathcal{N}_{in} \cup \mathcal{N}_{ex})$ contains an even number of second kind intermediations. These are distributed such that both $C_2 \cap \mathcal{N}_{in}$ and $C_2 \cap \mathcal{N}_{ex}$ contain either an even or an odd number, yielding an even parity for their difference. In addition, the contribution from the change of properties in $\mathcal{N}_{sw}$ is determined to be even by Corollary \ref{cor:parity_change_single}.

Conversely, if the two intermediations in $\mathcal{N}_{sw}$ differ in property (one is first-kind and the other is second kind), the set $C_2 \cap (\mathcal{N}_{in} \cup \mathcal{N}_{ex})$ contains an odd number of second kind intermediations. Consequently, one subset must contain an odd number and the other an even number, which gives their difference an odd parity. Since the contribution from the change of properties in $\mathcal{N}_{sw}$ is also determined to be odd by Corollary \ref{cor:parity_change_single}, the sum of these two odd parities is even.

Hence, we conclude that the difference $N_{\textrm{II}}(C_1 \oplus C_2)-N_{\textrm{II}}(C_1)$ yields an even parity in all cases where $N_{\textrm{II}}(C_2)$ is even.

  Thus we conclude that the parity of the difference $N_{\textrm{II}}(C_1 \oplus C_2)-N_{\textrm{II}}(C_1)$ always matches the parity of $N_{\textrm{II}}(C_2)$. 
\end{proof}

\subsection*{Lemmas for Theorem \ref{thm:property_of_junction}}\label{sec:sppl_lemmas}
\begin{lemma}\label{lem:property_change}
  Let $C_1$ and $C_2$ be two cycles, and let $C_1 \oplus C_2$ denote the joined cycle. For any node $n_{(x,y)}$ shared by both cycles, if its intermediation property differs between $C_1$ and $C_2$, it acts as a second kind intermediation in $C_1 \oplus C_2$. If the property is the same in both cycles, it acts as a first-kind intermediation in $C_1 \oplus C_2$.
\end{lemma}
\begin{proof}
  Suppose $n_{(x,y)} \in \mathcal{N}_{sw}$ acts as a second kind intermediation in $C_2$. The transition from $C_2$ to $C_1 \oplus C_2$ alters the path through $n_{(x,y)}$ according to one of the following two transformations:
  one is from $n_{(v,x)} \xrightarrow{e_{[v,x;y]}} n_{(x,y)} \xrightarrow{e_{[x,z;y]}} n_{(y,z)}$ to $n_{(v,x)} \xrightarrow{e_{[v,x;y]}} n_{(x,y)} \xrightarrow{e_{[w,y;x]}} n_{(w,x)}$,
  and another is from $n_{(v,y)} \xrightarrow{e_{[v,x;y]}} n_{(x,y)} \xrightarrow{e_{[x,z;y]}} n_{(y,z)}$ to
  $n_{(v,x)} \xrightarrow{e_{[v,x;y]}} n_{(x,y)} \xrightarrow{e_{[w,y;x]}} n_{(w,x)}$,
  where the polygon index $v$ can represent any polygon in the unfolding other than $x$ and $y$.
  In any configuration of the first case (i.e., $v=w$, $v=z$, or $v \notin \{w,z\}$), the polygon index shared by $n_{(v,x)}$ and $n_{(x,y)}$ is invariably $x$. Thus, this structural change shifts $n_{(x,y)}$ from a second kind to a first-kind intermediation, inducing a parity change of $1$.
  Analogous reasoning applies to the second path transformation, which also yields a parity change of $1$.
  
Therefore, when $n_{(x,y)}$ is a second kind intermediation in $C_2$, its property change contributes exactly $1$ to the value of $N_{\textrm{II}}(C_1 \oplus C_2) - N_{\textrm{II}}(C_2)$.
\end{proof}

Lemma \ref{lem:property_change} can be restated as follows,
\begin{cor}\label{cor:parity_change_single}
  Consider the joined cycle $C_1 \oplus C_2$ formed from cycles $C_1$ and $C_2$. For any shared node $n_{(x,y)}$, if it is first-kind intermediation in $C_2$, its intermediation-property in $C_1$ is conserved in $C_1 \oplus C_2$.
  If it is second kind intermediation in $C_2$, its property is altered, making it different in $C_1 \oplus C_2$ from that in $C_1$.
\end{cor}
\begin{proof}
  On the case that $n_{(x,y)}$ is first-kind in $C_2$, when $n_{(x,y)}$ is first-kind in $C_1$ it is also first-kind in $C_1 \oplus C_2$ and when it is second kind in $C_1$ then it is also second kind in $C_1 \oplus C_2$.
   While, on the case that $n_{(x,y)}$ is second kind in $C_2$, when $n_{(x,y)}$ is first-kind in $C_1$ it is also second kind in $C_1 \oplus C_2$ and when it is second kind in $C_1$ then it is also first-kind in $C_1 \oplus C_2$.
\end{proof}

\if 0
\begin{conj}
\label{conj:1}
This year, all submissions will be created using this template.
\end{conj}

\begin{obs}
The statement in Conjecture~\ref{conj:1} did not hold true last year.
\end{obs}

\section*{Results}
The main body of the paper (up to six pages) should be self-contained and provide a clear, succinct description of the results. 
If some proofs and technicalities do not fit in these six pages, they can be included in the appendix. We expect all proofs to be available for the reviewers.

\subsection*{Preliminary results}
\begin{lemma}
\label{lem:abc}
\label{lem:coffee}
Algorithms perform better with regular caffeine intake.
\end{lemma}

We provide a sketch of the proof here. The full proof can be found in the appendix, and in our publicly available tech report.  

\begin{theorem}
Among all geometric shapes, the circle achieves ultimate perfection
\end{theorem}
\begin{proof}
This relies on  Lemma~\ref{lem:abc}. 
Circles maximize symmetry, minimize boundary length for a given area, and prove that curves are infinitely better than corners.
\end{proof}

\subsection*{Algorithm}

\begin{enumerate}
\item
Do nothing.
\item
Go to step 1.
\end{enumerate}

\section*{Remarks}   
Please avoid changing anything in this template that will cause the fonts and margins to look different.\\
You may use up to six pages, not including references or the appendices.
Use pdflatex.\\
\fi

\end{document}